%% file: root.tex
\documentclass[letterpaper, 10 pt, conference]{ieeeconf}  % Comment this line out if you need a4paper

\IEEEoverridecommandlockouts                              % This command is only needed if 
\usepackage{amsmath,amssymb,amsfonts} % amsthm interferes with proof environment from ieeeconf.cls.
\usepackage{graphicx}
\usepackage{pgfplots}
\usepackage{booktabs}
\usepackage{textcomp}
\usepackage{xcolor}
\usepackage{siunitx}
\usepackage[inkscapeformat=png]{svg}
\usepackage{subcaption}
\usepackage[acronym]{glossaries}
\usepackage[noadjust]{cite}

\newacronym{cbf}{CBF}{Control Barrier Function}
\newacronym{cps}{CPS}{cyber-physical systems}

\title{\LARGE \bf Peak Bounds for the Estimation Error under Sensor Attacks}

\author{Axel Stafström$^{1}$ \and Daniel Arnström$^{1,2}$ \and Adam Miksits$^{2,3}$ \and David Umsonst$^{2}$
\thanks{This work was partially supported by the Wallenberg AI, Autonomous Systems and Software Program (WASP) funded by the Knut and Alice Wallenberg Foundation.}% <-this % stops a space
\thanks{$^{1}$Uppsala University, Sweden, 
        {\tt\small \{daniel.arnstrom@it, axel.stafstrom.9670@student\}.uu.se}}%
\thanks{$^{2}$Ericsson Research, Sweden, 
        {\tt\small \{adam.miksits, david.umsonst,daniel.arnstrom\}@ericsson.com}}%
\thanks{$^{3}$KTH Royal Institute of Technology, Sweden
        {\tt\small amiksits@kth.se}}%
}

\newtheorem{theorem}{Theorem}
\newtheorem{corollary}{Corollary}
\newtheorem{proposition}{Proposition}
\newtheorem{lemma}{Lemma}
\newtheorem{assumption}{Assumption}

\newtheorem{definition}{Definition}

\newtheorem{remark}{Remark}
\newtheorem{problem}{Problem}

\newcommand{\abs}[1]{\left\lvert#1\right\rvert} 
\newcommand{\norm}[1]{\left\lVert#1\right\rVert}

\newcommand{\R}{\mathbb{R}}

\let\epsilon\varepsilon

\begin{document}

\maketitle
\thispagestyle{empty}
\pagestyle{empty}

%%%%%%%%%%%%%%%%%%%%%%%%%%%%%%%%%%%%%%%%%%%%%%%%%%%%%%%%%%%%%%%%%%%%%%%%%%%%%%%%
\begin{abstract}

This paper investigates bounds on the estimation error of a linear system affected by norm-bounded disturbances and full sensor attacks. 
The system is equipped with a detector that evaluates the norm of the innovation signal to detect faults, and the attacker wants to avoid detection.  
We utilize induced $L_\infty$ system norms, also called \emph{peak-to-peak} norms, to compare the estimation error bounds under nominal operations and under attack. This leads to a sufficient condition for when the bound on the estimation error is smaller during an attack than during nominal operation. 
This condition is independent of the attack strategy and depends only on the attacker's desire to remain undetected and (indirectly) the observer gain.
Therefore, we investigate both an observer design method, that seeks to reduce the error bound under attack while keeping the nominal error bound low, and detector threshold tuning.
As a numerical illustration, we show how a sensor attack can deactivate a robust safety filter based on control barrier functions if the attacked error bound is larger than the nominal one. 
We also statistically evaluate our observer design method and the effect of the detector threshold.

\end{abstract}

%%%%%%%%%%%%%%%%%%%%%%%%%%%%%%%%%%%%%%%%%%%%%%%%%%%%%%%%%%%%%%%%%%%%%%%%%%%%%%%%
\input{introduction}
\input{preliminaries}
\input{problemformulation}
\input{errorboundanalysis}
\input{defensemechanisms}
\input{applicationToRobustCBFs}
\input{conclusions}

% \addtolength{\textheight}{-9cm}   % This command serves to balance the column lengths
                                  % on the last page of the document manually. It shortens
                                  % the textheight of the last page by a suitable amount.
                                  % This command does not take effect until the next page
                                  % so it should come on the page before the last. Make
                                  % sure that you do not shorten the textheight too much.

%%%%%%%%%%%%%%%%%%%%%%%%%%%%%%%%%%%%%%%%%%%%%%%%%%%%%%%%%%%%%%%%%%%%%%%%%%%%%%%%

%%%%%%%%%%%%%%%%%%%%%%%%%%%%%%%%%%%%%%%%%%%%%%%%%%%%%%%%%%%%%%%%%%%%%%%%%%%%%%%%

%%%%%%%%%%%%%%%%%%%%%%%%%%%%%%%%%%%%%%%%%%%%%%%%%%%%%%%%%%%%%%%%%%%%%%%%%%%%%%%%
% \section*{APPENDIX}

% The preferred spelling of the word acknowledgment in America is without an e after the g. Avoid the stilted expression, One of us (R. B. G.) thanks . . .  Instead, try R. B. G. thanks. Put sponsor acknowledgments in the unnumbered footnote on the first page.

%%%%%%%%%%%%%%%%%%%%%%%%%%%%%%%%%%%%%%%%%%%%%%%%%%%%%%%%%%%%%%%%%%%%%%%%%%%%%%%%

% References are important to the reader; therefore, each citation must be complete and correct. If at all possible, references should be commonly available publications.

\bibliography{ref} %IEEEabrv

\end{document}

%% file: introduction.tex
\section{Introduction}
\label{sec:introduction}

Today's control systems have evolved from simple plants with local controllers to complex networked and \acrfull{cps}, where the control loops are closed over communication networks and the plants interact with the physical environment \cite{CPSPaper}. 
Examples include large infrastructure systems, such as electrical power grids, and mobile robots relying on remote controllers or estimators. 
For these systems, safety is critical, since unsafe behavior could lead to power outages and costly downtime, but also injuries to humans, if robots operate in shared environments.

Safety is often defined as constraints on the states, and is achieved by designing the input of the system such that the constraints are satisfied. Different control methods for enforcing safety constraints exist, such as model-predictive control~\cite{MPCBook} and \acrlong{cbf}s (\acrshort{cbf}s)~\cite{ames_control_2019}. 
\acrshort{cbf}s are particularly relevant for networked cyber-physical systems, since they can be used locally, as so-called safety filters. This ensures that the applied input is always safe, even if other components of the control system are distributed over the network. 
\emph{Robust} versions of these methods, such as tube MPC (see Chapter~4 of ~\cite{MPCBook}) and robust \acrshort{cbf}s (see, for example, \cite{choi_robust_2021, agrawal_safe_2023}), can be used for systems with uncertainties. The general idea in these methods is to use a bound on the estimation error, to guarantee safety in the presence of bounded disturbances acting on the system and the measurements.

In addition to safety, an important aspect with regard to modern \acrshort{cps}s is cybersecurity \cite{CPSSecurity}. Due to their connection to communication networks, these systems can be vulnerable to attacks -- such as having their transmitted signals modified, to render the system unstable or unsafe.
Safety filters have been shown to keep the system safe from these kinds of attacks as well, for example in \cite{SafetyFilterActuatorAndSensorAttacks} for actuator and sensor attacks and in \cite{ModularSafetyFilter}, where the safety filter is placed on the plant side and has full state information. 
However, when the safety filter uses a state estimator, a sensor attack might deactivate the safety filter \cite{ArnstroemStealthyDeactivation}, without triggering an anomaly detector.
Other works have looked into attack strategies that maximize the estimation error, such as \cite{AttackOnRemoteStateEsimation}, where the attack aims to maximize the mean square error.

In this paper, we analyze how the state estimation error bounds are affected for a system under sensor attack, in order to reason about how such attacks can be mitigated using robust control methods that rely on error bounds. We introduce the concept of a system being \emph{attack-robust} (Definition \ref{def:AttackRobustness}), which allows robust control methods to directly mitigate attacks. Previous works have looked into similar problems, such as~\cite{ResilientSetBasedEstimation}, where a bounded set for the true state is computed, or~\cite{SeverSensorAttackCBF}, where a \acrshort{cbf} is modified to provide safety guarantees for systems under sensor attacks.
However, both assume that only a subset of the sensors are under attack.

The output-to-output gain introduced in \cite{Output2OutputNorm} shows similarities to our approach but considers only finite energy signals.
Furthermore, our analysis uses induced $L_\infty$ system norms, which, to the best of our knowledge, have not been used in secure control settings. This enables us to consider more general setups, such as persistent disturbances and attacks, as well as a general class of detectors.

Our contributions are twofold. First, we provide a sufficient condition for when the bound on the estimation error is smaller during the attack than during nominal conditions (Theorem~\ref{thm:largerNominalBound}). 
This condition does not rely on the actual attack strategy, instead requiring only the assumption that the attacker wants to remain undetected. 
Hence, it provides an offline method to check if the nominal error bound is sufficient to guarantee the safety of the system under sensor attacks. 
Second, we propose an observer-design method (Optimization Problem \eqref{eq:FloatingDiffOptimisation}) that limits the impact of sensor attacks while enabling a trade-off between the minimization of error bounds in the nominal case and the attacked case. 

Our theoretical results are evaluated on an illustrative example using a robust CBF from \cite{agrawal_safe_2023} and the attack proposed in \cite{ArnstroemStealthyDeactivation}. Then, we statistically evaluate the observer design method and show its effect on the system. Lastly, we consider the influence of the detector threshold on the attack bound.

%% file: preliminaries.tex
\section{System norms}
\label{sec:preliminaries}
In this section, we introduce several concepts related to signal and system norms. These will be used later to define the maximum instantaneous $q$-norm of the estimation error, which can be used to guarantee safety by compensating for this maximum error \cite{OR_CBF}. 
The maximum instantaneous $q$-norm of a signal $u(t)$ is described by an $L_\infty$ norm.
\begin{definition}[$L_\infty$ norm]\label{def:SignalNorm}
    Let $u : \mathbb{R} \to \mathbb{R}^{n_u}$ and $q\geq 1$. Then, the $L_\infty$ norm of $u$ with respect to the vector $q$-norm is defined as
    \begin{equation*}
        \norm{u}_{\infty, q} = \sup_t \abs{u(t)}_q=\sup_t \left(\sum_{i=1}^{n_u}|u_i(t)|^q\right)^{\frac{1}{q}}.
    \end{equation*}
\end{definition}
Here, $\norm{u}_{\infty, q}$ defines the size of the $q$-norm ball in which the vector norm $\abs{u(t)}_q$ lies for all $t$.

Let $g_{yu}: \mathbb{R}\rightarrow \mathbb{R}^{n_y\times n_u}$ be an impulse response of a system with input signal $u(t)$ and output signal $y(t)$ such that $y(t)=g_{yu}(t)*u(t)$, where $*$ denotes convolution. To describe the maximum instantaneous $q$-norm of the output $y(t)$, the system norm of $g_{yu}(t)$ is a useful tool, as it describes the amplification from $u(t)$ to $y(t)$.
The system norm induced by the signal norm in Definition~\ref{def:SignalNorm} is defined as follows.
\begin{definition}\label{def:SystemNorm}
    Let $g_{yu}(t)$ be the impulse response of some system. Then, the system norm induced by the corresponding $L_\infty$ norm is given by
    \begin{equation*}
        \norm{g_{yu}}_{1,q} = \sup_{\norm{u}_{\infty, q} = 1} \norm{g_{yu} * u}_{\infty, q}.
    \end{equation*}
\end{definition}
Note that the induced system norm is not necessarily the signal norm of the impulse response. 
The system norms in Definition~\ref{def:SystemNorm} are also known as peak-to-peak norms and if $q=\infty$ we recover the $\ell_1$ system gain \cite{vidyasagar1986optimal}.
These norms are useful for systems influenced by persistent but bounded signals, such as attacks.
Some properties of induced norms are described below, which can be found in \cite{DesoerVidyasagar}.
\begin{lemma}
    \label{lem:InducedNormProperties}
    Let $f$, $g$, and $h$ be impulse responses of appropriate dimensions, $\norm{f}_{1,q}<\infty$, $\norm{g}_{1,q}<\infty$, ${\norm{h}_{1,q}<\infty}$, and $x(t)$ be a signal of appropriate dimension with ${\abs{x(t)}_q<\infty}$, where $q\geq 1$.
    Then the induced norm has the following properties:
    \begin{enumerate}
        \item $\norm{f*x}_{\infty,q} \leq \norm{f}_{1,q} \norm{x}_{\infty,q}$.
        \item $\norm{f + g}_{1,q} \leq \norm{f}_{1,q} + \norm{g}_{1,q}$,
        \item $\norm{f \circ h}_{1,q} \leq \norm{f}_{1,q} \cdot \norm{h}_{1,q}$.
    \end{enumerate}
\end{lemma}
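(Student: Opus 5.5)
All three properties follow directly from Definition~\ref{def:SystemNorm}, together with linearity of convolution and the triangle inequality for the signal norm of Definition~\ref{def:SignalNorm}. I will first check that $\norm{\cdot}_{\infty,q}$ is a norm on bounded signals. Positive homogeneity is clear. For the triangle inequality, $\abs{u(t)+v(t)}_q \le \abs{u(t)}_q+\abs{v(t)}_q \le \norm{u}_{\infty,q}+\norm{v}_{\infty,q}$ holds pointwise by Minkowski's inequality for $q\ge 1$, and taking the supremum over $t$ gives the claim.

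\textbf{Property 1.} If $\norm{x}_{\infty,q}=0$, then $x\equiv 0$, so $f*x\equiv 0$ and the inequality is trivial. Otherwise set $\tilde x = x/\norm{x}_{\infty,q}$, which satisfies $\norm{\tilde x}_{\infty,q}=1$. By linearity of convolution and homogeneity of the norm,
\begin{equation*}
\norm{f*x}_{\infty,q}=\norm{x}_{\infty,q}\,\norm{f*\tilde x}_{\infty,q}\le \norm{x}_{\infty,q}\,\norm{f}_{1,q},
\end{equation*}
where the last step is the definition of the supremum. The hypothesis $\abs{x(t)}_q<\infty$ should be read as $x$ being bounded, that is, $\norm{x}_{\infty,q}<\infty$, so that the normalization makes sense. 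If $x$ is unbounded the right-hand side is $+\infty$ and the inequality holds trivially.

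\textbf{Property 2.} Take any $u$ with $\norm{u}_{\infty,q}=1$. Convolution distributes over the sum, so $(f+g)*u=f*u+g*u$. The triangle inequality for the signal norm and Property 1 then give
\begin{equation*}
\norm{(f+g)*u}_{\infty,q}\le \norm{f*u}_{\infty,q}+\norm{g*u}_{\infty,q}\le \norm{f}_{1,q}+\norm{g}_{1,q}.
\end{equation*}
Taking the supremum over all such $u$ yields the result.

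\textbf{Property 3.} I interpret $f\circ h$ as the series interconnection, which for impulse responses is $f*h$, since $(f\circ h)*u = f*(h*u)$ by associativity of convolution. For $\norm{u}_{\infty,q}=1$, applying Property 1 twice gives
\begin{equation*}
\norm{f*(h*u)}_{\infty,q}\le \norm{f}_{1,q}\,\norm{h*u}_{\infty,q}\le\norm{f}_{1,q}\,\norm{h}_{1,q}.
\end{equation*}
Here $h*u$ is bounded because $\norm{h}_{1,q}<\infty$, which is what permits the second application of Property 1. Taking the supremum over $u$ completes the proof.

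The main obstacle is not any of these inequalities but the well-posedness of the setting. One must make sure the convolutions exist and that associativity holds, that is, $(f*h)*u=f*(h*u)$. Finite induced norms together with, for instance, absolutely integrable (BIBO-stable) impulse responses justify the required Fubini interchange. This is the standard setting of \cite{DesoerVidyasagar}, and I would assume it rather than re-derive it.
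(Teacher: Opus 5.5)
Your proof is correct. It differs from the paper mainly in being self-contained: the paper gives no proof of this lemma and attributes the three properties to \cite{DesoerVidyasagar}. You instead derive everything directly from Definition~\ref{def:SystemNorm}, using only four ingredients:
\begin{itemize}
\item homogeneity of $\norm{\cdot}_{\infty,q}$;
\item the triangle inequality for $\norm{\cdot}_{\infty,q}$, from Minkowski for $q\geq 1$;
\item linearity of convolution;
\item associativity for the series connection.
\end{itemize}
This makes visible that the properties are just the generic subadditivity and submultiplicativity of any operator norm induced by a signal norm. Nothing about the exponential form of the impulse responses, or about $q$ beyond $q\geq 1$, enters. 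Your reading of $f\circ h$ as $f*h$ is also exactly what the paper needs in the proof of Corollary~\ref{cor:noFalseAlarmsThreshold}, where $g_{ed}=g_{\tilde{e}d}+g_{er}*g_{rd}$.

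What the citation supplies, and what you explicitly assume instead, is well-posedness: that the convolutions exist and that the Fubini interchange behind $(f*h)*u=f*(h*u)$ is valid. That assumption is essentially free under the lemma's own hypotheses. For kernels that are locally integrable functions, possibly with Dirac terms such as $N_2\delta(t)$ in $g_{rd}$, a finite peak-to-peak gain already forces each entry to be absolutely integrable (a finite measure in the Dirac case). To see this, test with inputs $e_j s(t)$ where $s$ is a sign pattern. So finite induced norm already gives BIBO stability.

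One small point: drop the remark about unbounded $x$. In that case $f*x$ need not exist, and $\norm{f}_{1,q}\cdot\infty$ is ambiguous when $\norm{f}_{1,q}=0$. Reading the hypothesis as $\norm{x}_{\infty,q}<\infty$, as you do, is all the paper ever uses, namely $\norm{d}_{\infty,q}\leq\epsilon_d$ and $\norm{a}_{\infty,q}\leq\nu$.
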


%% file: problemformulation.tex
\section{Problem Formulation}
In this section, we introduce the system and attacker model as well as formulate the problem we want to address.

We consider a continuous-time linear time-invariant plant with the following dynamics;
\begin{equation}
    \label{eq:plant}
    \left\{
    \begin{aligned}
        \Dot{x}(t) &= Ax(t) + Bu(t) + N_1d(t)\\
        y(t) &= Cx(t) + N_2d(t)
    \end{aligned}\right.
\end{equation}
where $x(t)\in\mathbb{R}^{n_x}$, $u(t)\in\mathbb{R}^{n_u}$, and $y(t)\in\mathbb{R}^{n_y}$ are the states, inputs, and outputs of the plant, respectively. 
The signal $d(t)\in\mathbb{R}^{n_d}$ is a disturbance, and note that if $N_1^\top N_2=0$, it has elements that independently influence the state and the measurements, representing process and measurement noise.
\begin{assumption}
    \label{assum:boundeddisturbance}
    The disturbance is bounded such that $|d(t)|_q\leq \epsilon_d$ for all $t\geq 0$, i.e., $\norm{d}_{\infty,q}\leq \epsilon_d$
\end{assumption}

Since we only have access to the plant's outputs, an observer $\hat{x}(t)\in\mathbb{R}^{n_x}$ is used to estimate $x(t)$,
\begin{equation}
    \label{eq:observerdynamics}
    \Dot{\hat{x}}(t) = A\hat{x}(t) + Bu(t) + Kr(t),
\end{equation}
where $K$ is the observer gain and the innovation $r(t)\in\mathbb{R}^{n_y}$ is the difference between the measured and estimated output,
\begin{align}
    \label{eq:innovation}
    r(t)=y(t)-C\hat{x}(t).
\end{align}

The norm of the innovation is often used to determine faults in the system, which leads to the following detector
\begin{align}
    \label{eq:detector}
    |r(t)|_q\begin{cases}
        \leq \nu\quad \text{no alarm},\\
        > \nu\quad \text{alarm},
    \end{cases}
\end{align}
where $q\geq 1$ and $\nu$ is the user-defined detector threshold. The choice of $\nu$ is important, since the goals of detecting all faults and attacks while avoiding false alarms are often in conflict.

Given \eqref{eq:plant} and \eqref{eq:observerdynamics}, we write the dynamics of the estimation error $e(t) = x(t) - \hat{x}(t)$ as follows,
\begin{align}
    \label{eq:errordynamics}
    \Dot{e}(t)=(A-KC)e(t)+(N_1-KN_2)d(t).
\end{align}
\begin{assumption}
    \label{assum:ISS-observer}
    The observer gain $K$ is designed such that the error is input-to-state stable.
\end{assumption}
With Assumption~\ref{assum:ISS-observer}, the estimation error is bounded, i.e., $\abs{e(t)}_q \leq \epsilon_e$ for all $t$, once the influence of the initial states has decayed.

With the nominal system defined, we turn our attention to the adversarial model. 
Having an adversary in the loop will change the dynamics of the estimation error as well as the error bound. As discussed in Section~\ref{sec:introduction}, an accurate bound on the estimation error is important to be able to use robust methods (e.g., \cite{MPCBook,choi_robust_2021, agrawal_safe_2023}) to guarantee constraint satisfaction and safety.
Therefore, we introduce the concept of \emph{attack-robustness}, which relates the nominal error $e(t)$ and the error under attack, denoted $\tilde{e}(t)$.
\begin{definition}
\label{def:AttackRobustness}
A system of the form \eqref{eq:plant}
with observer \eqref{eq:observerdynamics} is \emph{attack-robust} if the nominal bound on the estimation error is larger than the bound under attack, i.e., ${\norm{\tilde{e}}_{\infty,q} \leq \norm{e}_{\infty,q}}$.
\end{definition}

In this paper, we want to investigate the attack-robustness to an attack that targets the sensors. We consider an adversary that replaces the measurement $y(t)$ with $\tilde{y}(t)$, which it can choose freely, as long as it satisfies the constraint used in the detector \eqref{eq:detector}. 
By choosing
\begin{align}
    \label{eq:sensorattack}
    \tilde{y}(t)=C\hat{x}(t)+a(t),\quad \abs{a(t)}_q \leq \nu
\end{align}
the attacker has full control over the innovation, i.e., ${r(t) = a(t)}$.
The constraint $\abs{a(t)}_q \leq \nu$ ensures that the attack does not trigger the detector. 
The attacker's knowledge of $\hat{x}(t)$ in \eqref{eq:sensorattack} may seem like a strong assumption, but in \cite{ConfidentialityAttack} it was shown that an attacker with model knowledge and access to $y(t)$ can obtain $\hat{x}(t)$.

Through replacing $r(t)$ with $a(t)$ in the observer dynamics \eqref{eq:observerdynamics}, the adversary changes the error dynamics to
\begin{align}
    \label{eq:attackederrordynamics}
    \Dot{\tilde{e}}(t)=A\tilde{e}(t)+N_1d(t)-Ka(t).
\end{align}
Clearly, the nominal error dynamics \eqref{eq:errordynamics} are different from the attacked error dynamics \eqref{eq:attackederrordynamics}.
Thus, given Definition~\ref{def:AttackRobustness} and the attack \eqref{eq:sensorattack} we want to tackle the following problem.
\begin{problem}
    \label{prob:AttackedErrorBound}
    Under a sensor attack of the form \eqref{eq:sensorattack},
    when is system \eqref{eq:plant} with observer \eqref{eq:observerdynamics} \emph{attack-robust}?
\end{problem}
The conditions for attack-robustness derived in Section~\ref{sec:ErrorBoundAnalysis} depend implicitly on the observer gain $K$, since $K$ influences the attacked error dynamics~\eqref{eq:attackederrordynamics}. 
Normally, $K$ is designed to reduce the nominal estimation error~\eqref{eq:errordynamics} as much as possible, but with the attacked case in mind as well we can formulate a second problem concerning observer design.
\begin{problem}
    \label{prob:OberserverDesign}
    How can the observer gain $K$ be designed to improve attack-robustness, without sacrificing nominal performance?
\end{problem}

%% file: errorboundanalysis.tex
\section{Analysis of the attacked error bound} 
\label{sec:ErrorBoundAnalysis}

To address Problem~\ref{prob:AttackedErrorBound}, we first derive the nominal error bound $\epsilon_e$ for \eqref{eq:errordynamics}. We then derive the attacked bound $\epsilon_{\tilde{e}}$ for \eqref{eq:attackederrordynamics}, such that $\norm{\tilde{e}(t)}_{\infty,q} \leq \epsilon_{\tilde{e}}$. 
These bounds are then used to provide a condition for attack-robustness of system \eqref{eq:plant}.

From \eqref{eq:errordynamics}, the error $e(t)$ is given by
\begin{equation} 
    e(t)= e^{(A - KC)t}(N_1 - KN_2)*d(t) = g_{ed}(t)*d(t),
\end{equation}
where $g_{ed}(t)$ is the impulse response of the error dynamics. It is therefore bounded according to
\begin{align}
    \label{eq:nominalErrorbBund}
    \norm{e}_{\infty,q}=\norm{g_{ed}*d}_{\infty,q}
    \leq \norm{g_{e d}}_{1,q}\epsilon_d = \epsilon_{e},
\end{align}
where $\norm{g_{e d}}_{1,q}$ exists due to Assumption~\ref{assum:ISS-observer} and thus, $\epsilon_e < \infty$. 

Similarly, the attacked error dynamics \eqref{eq:attackederrordynamics} yield,
\begin{align}
    \tilde{e}(t)&= e^{At}*(N_1d(t) - Ka(t)) \label{eq:matrixConvolutionAttackedError}\\
    &= g_{\tilde{e}d}(t)*d(t) + g_{\tilde{e}a}(t)*a(t)
    \label{eq:convolutionAttackedError}
\end{align}
and the bound on the attacked error bound is given by
\begin{equation}
\label{eq:attackedErrorBound}
    \begin{aligned}
        \norm{\tilde{e}}_{\infty,q}&=\norm{g_{\tilde{e}d}*d+g_{\tilde{e}a}*a}_{\infty,q}\\
        &\leq \norm{g_{\tilde{e}d}*d}_{\infty,q}+\norm{g_{\tilde{e}a}*a}_{\infty,q}\\
        &\leq \norm{g_{\tilde{e}d}}_{1,q}\epsilon_d+\norm{g_{\tilde{e}a}}_{1,q}\nu = \epsilon_{\tilde{e}}.
    \end{aligned}
 \end{equation}
Further, note the use of the adversary's desire to remain undetected, i.e., $ \norm{a}_{\infty,q} \leq \nu$.

Based on the attacked error bound $\norm{\tilde{e}}_{\infty,q}$, we now provide a necessary condition for attack robustness.
\begin{proposition}
    \label{prop:StableSystemNecessary}
    The LTI system \eqref{eq:plant} and observer \eqref{eq:observerdynamics} under the sensor attack \eqref{eq:sensorattack} are \emph{attack-robust} according to Definition~\ref{def:AttackRobustness} only if $A$ is Hurwitz.
\end{proposition}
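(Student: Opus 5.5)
We argue by contraposition: assuming $A$ is not Hurwitz, we exhibit admissible disturbance and attack signals for which $\norm{\tilde{e}}_{\infty,q}$ is infinite, while $\norm{e}_{\infty,q}\leq\epsilon_e<\infty$ by \eqref{eq:nominalErrorbBund} and Assumption~\ref{assum:ISS-observer}. Then $\norm{\tilde{e}}_{\infty,q}\leq\norm{e}_{\infty,q}$ fails, so the system is not attack-robust in the sense of Definition~\ref{def:AttackRobustness}.

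The attacked error \eqref{eq:attackederrordynamics} is
\begin{equation*}
    \tilde{e}(t)=e^{At}\tilde{e}(0)+\int_0^t e^{A(t-s)}\bigl(N_1d(s)-Ka(s)\bigr)\,ds .
\end{equation*}
We take $d\equiv 0$, which satisfies Assumption~\ref{assum:boundeddisturbance}, and choose an attack with $\abs{a(t)}_q\leq\nu$ that makes $\tilde{e}$ unbounded. Since $A$ is not Hurwitz, it has an eigenvalue $\lambda$ with $\operatorname{Re}\lambda\geq 0$. We distinguish two cases.

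\emph{Case 1: $\tilde{e}(0)$ has a nonzero component along an unstable mode.} Set $a\equiv 0$. If $\operatorname{Re}\lambda>0$, the free response $e^{At}\tilde{e}(0)$ diverges. If $\operatorname{Re}\lambda=0$, this case alone gives only boundedness, not divergence, so we do not rely on the initial condition and proceed as in Case 2.

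\emph{Case 2 (the generic argument).} We use a resonant attack. Take a left eigenvector $w$ with $w^*A=\lambda w^*$, and consider the scalar $z(t)=w^*\tilde{e}(t)$. It satisfies
\begin{equation*}
    \dot z=\lambda z - w^*Ka .
\end{equation*}
If $w^*K\neq 0$, choose $a(t)=-\nu\,\operatorname{Re}\bigl(c\,e^{i\omega t}\bigr)$, where $\omega=\operatorname{Im}\lambda$ and $c$ is a fixed vector aligned with $(w^*K)^*$ and normalized so that $\abs{a(t)}_q\leq\nu$. Real signals must be used; splitting the forcing into its $e^{\pm i\omega t}$ components leaves one component that resonates with $\lambda$. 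This forcing makes $\abs{z(t)}$ grow at least linearly when $\operatorname{Re}\lambda=0$, and exponentially when $\operatorname{Re}\lambda>0$. Hence $\norm{\tilde{e}}_{\infty,q}=\infty$.

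The main obstacle is the subcase $w^*K=0$ for every left eigenvector of an unstable eigenvalue, that is, when the attack cannot excite the unstable mode. Here we use Assumption~\ref{assum:ISS-observer}: $A-KC$ is Hurwitz, so $(A,C)$ is detectable. We have $w^*(A-KC)=\lambda w^*-w^*KC=\lambda w^*$, so $\lambda$ would be an eigenvalue of $A-KC$ with $\operatorname{Re}\lambda\geq 0$, a contradiction. Thus $w^*K\neq 0$ always holds, and Case 2 always applies.

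If $\operatorname{Re}\lambda=0$ and the real/complex bookkeeping of the resonant forcing becomes messy, we fall back on a simpler route. Since $A$ is not Hurwitz, the impulse response $g_{\tilde{e}a}(t)=-e^{At}K$ is not integrable, because $w^*e^{At}K=e^{\lambda t}w^*K$ does not decay. The peak-to-peak gain $\norm{g_{\tilde{e}a}}_{1,q}$ is then infinite. By the definition of the induced norm (Definition~\ref{def:SystemNorm}), for every $M$ there exists an admissible $a$ with $\norm{a}_{\infty,q}\leq\nu$ and $\norm{g_{\tilde{e}a}*a}_{\infty,q}>M$. Taking $M>\epsilon_e$ gives $\norm{\tilde{e}}_{\infty,q}>\norm{e}_{\infty,q}$, which completes the contraposition.
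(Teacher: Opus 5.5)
Your proof is correct and takes the same approach as the paper. Both argue that if $A$ is not Hurwitz, the attacker can drive an unstable mode of $A$ through the input $-Ka$, so that $\norm{\tilde{e}}_{\infty,q}$ is unbounded while $\epsilon_e<\infty$. Your version is more complete than the paper's one-line sketch: the observation that $w^*K=0$ would give $w^*(A-KC)=\lambda w^*$ uses Assumption~\ref{assum:ISS-observer} to show that the unstable modes really are excitable by the attack, a step the paper takes for granted. Case 1 is unnecessary, and like the paper you implicitly need $\nu>0$.
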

\begin{proof}
    From \eqref{eq:matrixConvolutionAttackedError}, we observe that if $A$ is not Hurwitz the attacker is able to choose $a(t)$ such that $\tilde{e}(t)$ grows unbounded by exciting the unstable modes of $A$. Thus, $\norm{\tilde{e}}_{\infty,q}$ remains bounded only if $A$ is Hurwitz.
\end{proof}
\begin{remark}
    Proposition~\ref{prop:StableSystemNecessary} shows us that an unstable system under a sensor attack \eqref{eq:sensorattack} cannot be attack robust according to Definition~\ref{def:AttackRobustness}. Hence, to achieve attack robustness for an unstable system, a local controller that stabilizes the unstable system locally is necessary. For example, using $u(t)=-Lx(t)+u_{\mathrm{nom}}(t)$, where the local controller gain $L$ is designed such that $A-BL$ is stable.
\end{remark}
Next, we compare the bounds $\epsilon_e$ and $\epsilon_{\tilde{e}}$, which yields a sufficient condition for attack-robustness.
\begin{theorem}
    \label{thm:largerNominalBound}
    Given Assumptions~\ref{assum:boundeddisturbance} and~\ref{assum:ISS-observer}, consider the LTI system \eqref{eq:plant} and observer \eqref{eq:observerdynamics} under the sensor attack \eqref{eq:sensorattack}.  
    The system is \emph{attack-robust} according to Definition~\ref{def:AttackRobustness} if
    \begin{equation}
        \label{eq:conditionOnBounds}
         \norm{g_{\tilde{e}d}}_{1,q}\epsilon_d+\norm{g_{\tilde{e}a}}_{1,q} \nu \leq \norm{g_{ed}}_{1,q} \epsilon_d.
    \end{equation}
\end{theorem}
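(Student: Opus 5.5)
The plan is to chain the two bounds already derived in this section. The attacked-error estimate \eqref{eq:attackedErrorBound} gives
\[
\norm{\tilde e}_{\infty,q} \le \norm{g_{\tilde e d}}_{1,q}\epsilon_d + \norm{g_{\tilde e a}}_{1,q}\nu = \epsilon_{\tilde e}
\]
for every admissible disturbance ($\norm{d}_{\infty,q}\le\epsilon_d$ by Assumption~\ref{assum:boundeddisturbance}) and every undetected attack ($\norm{a}_{\infty,q}\le\nu$ by \eqref{eq:sensorattack}). Condition \eqref{eq:conditionOnBounds} then says precisely that $\epsilon_{\tilde e}\le \norm{g_{ed}}_{1,q}\epsilon_d=\epsilon_e$. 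So the proof is the single line $\norm{\tilde e}_{\infty,q}\le\epsilon_{\tilde e}\le\epsilon_e$, with the bounds in Definition~\ref{def:AttackRobustness} read as the worst-case bounds $\epsilon_{\tilde e}$ and $\epsilon_e$.

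In order, I would carry out the following steps.
\begin{enumerate}
\item Check that \eqref{eq:attackedErrorBound} is well defined, i.e.\ that $\norm{g_{\tilde e d}}_{1,q}$ and $\norm{g_{\tilde e a}}_{1,q}$ are finite.
\begin{itemize}
\item If either norm is infinite, the left-hand side of \eqref{eq:conditionOnBounds} is infinite. It therefore cannot be bounded by $\epsilon_e$, which is finite by Assumption~\ref{assum:ISS-observer} via \eqref{eq:nominalErrorbBund}. The hypothesis then fails and there is nothing to prove.
\item The one exception is the degenerate case $\epsilon_d=0$ together with $K=0$ or $\nu=0$, where the corresponding terms vanish and can be handled directly.
\item Consistently with Proposition~\ref{prop:StableSystemNecessary}, the condition thus implicitly forces the relevant attacked gains to be finite.
\end{itemize}
\item Apply the triangle inequality and item 1 of Lemma~\ref{lem:InducedNormProperties} to the decomposition \eqref{eq:convolutionAttackedError}. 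This is exactly \eqref{eq:attackedErrorBound}.
\item Combine with \eqref{eq:conditionOnBounds} and \eqref{eq:nominalErrorbBund} to conclude $\norm{\tilde e}_{\infty,q}\le\epsilon_e$.
\end{enumerate}

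The main obstacle is interpretive rather than computational. Definition~\ref{def:AttackRobustness} compares $\norm{\tilde e}_{\infty,q}$ with $\norm{e}_{\infty,q}$, but for a particular disturbance realisation $\norm{e}_{\infty,q}$ may be strictly smaller than $\epsilon_e$. The statement therefore only makes sense as a comparison of the bounds (the worst cases over admissible $d$), as the paper's wording ``the nominal bound $\ldots$ is larger than the bound under attack'' indicates.

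I would make this explicit by identifying $\norm{e}_{\infty,q}$ with its supremum over $\norm{d}_{\infty,q}\le\epsilon_d$. This supremum equals $\norm{g_{ed}}_{1,q}\epsilon_d=\epsilon_e$ by Definition~\ref{def:SystemNorm} and linearity (scaling $d$ by $\epsilon_d$). Similarly, $\norm{\tilde e}_{\infty,q}$ is read as its supremum over admissible $d$ and $a$, which is at most $\epsilon_{\tilde e}$ by step 2. The chain of inequalities then closes.

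I would also note that the transient due to initial conditions is ignored, as in the derivation of \eqref{eq:nominalErrorbBund}, i.e.\ zero initial error or steady state.
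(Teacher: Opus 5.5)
Your proposal is correct and is essentially the paper's proof: the paper likewise combines \eqref{eq:nominalErrorbBund}, \eqref{eq:attackedErrorBound} and \eqref{eq:conditionOnBounds} into $\norm{\tilde e}_{\infty,q}\le\epsilon_{\tilde e}\le\epsilon_e$, and it implicitly reads Definition~\ref{def:AttackRobustness} as a comparison of worst-case bounds, which you make explicit via $\sup_{\norm{d}_{\infty,q}\le\epsilon_d}\norm{g_{ed}*d}_{\infty,q}=\norm{g_{ed}}_{1,q}\epsilon_d$. Your extra finiteness check is not in the paper and is harmless, although your list of exceptions misses the case $\nu=0$, $\epsilon_d>0$, where an infinite $\norm{g_{\tilde e a}}_{1,q}$ is multiplied by zero; that case is trivial anyway, since then $a\equiv 0$.
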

\begin{proof}
    For attack-robustness, ${\norm{\tilde{e}}_{\infty,q} \leq \norm{e}_{\infty,q}}$ must hold. 
    Using the error bounds $\epsilon_e$ and $\epsilon_{\tilde{e}}$ given by \eqref{eq:nominalErrorbBund} and \eqref{eq:attackedErrorBound}, respectively, attack-robustness is guaranteed if \eqref{eq:conditionOnBounds} holds.
\end{proof}
Note that \eqref{eq:conditionOnBounds} does not depend on the actual attack strategy, but only on the system dynamics, the observer, and the detector threshold. Hence, it can be evaluated offline.

From Theorem~\ref{thm:largerNominalBound}, observe that the choice of detector threshold is crucial to guarantee $\epsilon_{\tilde{e}}\leq \epsilon_e$.
The choice that leads to no false alarms is particularly interesting and is considered next.

Rewriting \eqref{eq:innovation} as $r(t) = Ce(t) + N_2d(t)$ yields
\begin{align}
    \label{eq:impulseResponseInnovation}
    r(t)=(Cg_{ed}(t)+N_2\delta(t))*d(t) = g_{rd}(t)*d(t),
\end{align}
where $\delta(t)$ is the Dirac delta function.
Due to Assumptions~\ref{assum:boundeddisturbance} and~\ref{assum:ISS-observer}, the innovation can be bounded as follows
\begin{align}
    \norm{r}_{\infty,q} = \norm{g_{rd}*d}_{\infty,q} \leq \norm{g_{rd}}_{1,q}\epsilon_d.
\end{align}
Thus, by choosing $\nu = \norm{g_{rd}}_{1,q}\epsilon_d$, no alarms will be triggered under nominal conditions.

\begin{corollary} 
    \label{cor:noFalseAlarmsThreshold}
    Assume that $\nu = \norm{g_{rd}}_{1,q}\epsilon_d$. Then, the system is \emph{attack-robust} according to Definition~\ref{def:AttackRobustness} if    \begin{equation}
        \label{eq:noFalseAlarmThresholdCondition}
        \norm{g_{\tilde{e}d}}_{1,q} + \norm{g_{\tilde{e}a}}_{1,q} \norm{g_{r d}}_{1,q} =  \norm{g_{e d}}_{1,q}.
    \end{equation}
\end{corollary}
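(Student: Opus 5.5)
The plan is to reduce the corollary directly to Theorem~\ref{thm:largerNominalBound} by substituting the particular threshold $\nu = \norm{g_{rd}}_{1,q}\epsilon_d$ into condition \eqref{eq:conditionOnBounds}. First I will check that this threshold is admissible, i.e., finite. By Assumption~\ref{assum:ISS-observer}, $\norm{g_{ed}}_{1,q}<\infty$. By \eqref{eq:impulseResponseInnovation}, $g_{rd}$ is the sum of $C g_{ed}$ and a static feedthrough $N_2\delta$. Lemma~\ref{lem:InducedNormProperties} (properties 2 and 3) then gives $\norm{g_{rd}}_{1,q}\le \norm{C}_q\norm{g_{ed}}_{1,q}+\norm{N_2}_q<\infty$, where $\norm{\cdot}_q$ denotes the matrix norm induced by the vector $q$-norm. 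So $\nu$ is a well-defined finite number, and Theorem~\ref{thm:largerNominalBound} applies with this choice.

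Substituting $\nu$, the left-hand side of \eqref{eq:conditionOnBounds} becomes $\epsilon_d\bigl(\norm{g_{\tilde{e}d}}_{1,q}+\norm{g_{\tilde{e}a}}_{1,q}\norm{g_{rd}}_{1,q}\bigr)$, and the right-hand side is $\epsilon_d\norm{g_{ed}}_{1,q}$. If $\epsilon_d>0$, dividing by $\epsilon_d$ shows that \eqref{eq:conditionOnBounds} is equivalent to
\begin{equation*}
\norm{g_{\tilde{e}d}}_{1,q}+\norm{g_{\tilde{e}a}}_{1,q}\norm{g_{rd}}_{1,q}\le \norm{g_{ed}}_{1,q}.
\end{equation*}
The hypothesis \eqref{eq:noFalseAlarmThresholdCondition}, stated with equality, is a special case of this inequality, so Theorem~\ref{thm:largerNominalBound} yields attack-robustness. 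The degenerate case $\epsilon_d=0$ is handled separately: then $\nu=0$, so $a\equiv 0$ and $d\equiv 0$. With these inputs the bound \eqref{eq:attackedErrorBound} gives $\epsilon_{\tilde e}=0\le\epsilon_e$, and \eqref{eq:conditionOnBounds} holds trivially as $0\le 0$.

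I expect no real obstacle, since the argument is essentially algebraic. The one point needing care is that the corollary states an equality, whereas the argument only requires ``$\le$''. I will note explicitly that the proof establishes the result under the weaker inequality, which contains the stated equality as a special case. I will also remark that, as in Theorem~\ref{thm:largerNominalBound}, the comparison is between the upper bounds $\epsilon_{\tilde e}$ and $\epsilon_e$. The conclusion should therefore be read in the same sense as that theorem, so no new argument about the tightness of the bounds is required.
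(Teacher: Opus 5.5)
Your argument is correct for the implication as stated. Its core is the same first step as the paper: substitute $\nu=\norm{g_{rd}}_{1,q}\epsilon_d$ into \eqref{eq:conditionOnBounds} and divide by $\epsilon_d$. Your finiteness check on $\norm{g_{rd}}_{1,q}$ and the separate $\epsilon_d=0$ case are extra care the paper omits.

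You and the paper differ in how you treat equality versus inequality. The paper writes the nominal error as $e=e^{At}*(N_1d-Kr)$. This gives the decomposition $g_{ed}=g_{\tilde{e}d}+g_{\tilde{e}a}*g_{rd}$, using $g_{er}=-e^{At}K=g_{\tilde{e}a}$. Lemma~\ref{lem:InducedNormProperties} (properties 2 and 3) then gives $\norm{g_{ed}}_{1,q}\leq\norm{g_{\tilde{e}d}}_{1,q}+\norm{g_{\tilde{e}a}}_{1,q}\norm{g_{rd}}_{1,q}$ always. Combined with the substituted condition, this reverse bound forces equality.

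So the ``weaker inequality'' in your last paragraph is not actually weaker: it is equivalent to \eqref{eq:noFalseAlarmThresholdCondition}. Your route establishes only sufficiency, which is all the statement literally requires. The paper's extra step is what justifies stating the condition as an equality. It also supports the interpretation that follows the corollary: with the no-false-alarm threshold, $\epsilon_e$ can never exceed $\epsilon_{\tilde{e}}$. You should drop the suggestion that your proof covers a more general hypothesis, or better, add the decomposition argument to show the two conditions coincide.
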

\begin{proof}
    Substituting the threshold $\nu=\norm{g_{rd}}_{1,q}\epsilon_d$ into condition \eqref{eq:conditionOnBounds} leads to
    \begin{equation}
        \label{eq:conditionOnBoundsNoFalseAlarms}
        \norm{g_{\tilde{e}d}}_{1,q}+\norm{g_{\tilde{e}a}}_{1,q}\norm{g_{rd}}_{1,q} \leq \norm{g_{ed}}_{1,q}.
    \end{equation}  
    Writing the error as $e(t)=e^{At}*(N_1d(t) - Kr(t))$, we express the impulse response, $g_{ed}(t)$, as
    \begin{equation}
        g_{ed}(t) = g_{\tilde{e}d}(t) + g_{er}(t) * g_{rd}(t),
    \end{equation}
    where we used $g_{\tilde{e}d}(t)$ from \eqref{eq:convolutionAttackedError}, $g_{er}(t)=-e^{At}K$, and $g_{rd}(t)$ from \eqref{eq:impulseResponseInnovation}.
    Since the attack replaces the innovation, which interrupts the connection described by $g_{rd}(t)$ in the nominal case, it does not affect the connection from innovation to error. 
    Therefore, $g_{er}(t) = g_{\tilde{e}a}(t)$ and Lemma~\ref{lem:InducedNormProperties} yields
    \begin{equation}
        \label{eq:NomErrorUpperBoundNoFalseAlarms}
        \norm{g_{ed}}_{1,q} \leq \norm{g_{\tilde{e}d}}_{1,q}+\norm{g_{\tilde{e}a}}_{1,q}\norm{g_{rd}}_{1,q}.
    \end{equation}
    Comparing \eqref{eq:conditionOnBoundsNoFalseAlarms} and \eqref{eq:NomErrorUpperBoundNoFalseAlarms} yields the condition in \eqref{eq:noFalseAlarmThresholdCondition}.
\end{proof}

Corollary~\ref{cor:noFalseAlarmsThreshold} shows that the nominal error bound, $\epsilon_e$, will never be larger than and at most be equal to the attacked error bound, $\epsilon_{\tilde{e}}$, if the detector threshold is chosen such that no false alarms will occur under nominal conditions. 
However, because of the inequalities in the derivation, condition \eqref{eq:noFalseAlarmThresholdCondition} rarely holds in practice.
Albeit a sufficient condition, this suggests that to guarantee attack-robustness, one needs to allow false alarms in the nominal case.
\begin{remark}
    For a system that fulfills \eqref{eq:noFalseAlarmThresholdCondition}, the disturbance that produces the maximum error must also produce the maximum innovation.
    Otherwise, the attacker can choose an attack that points in the direction of the maximum error with the size of the maximum innovation. In the nominal case, this is an unnatural occurrence, but due to the independence of the attack from the disturbance it is possible in the attacked case. Therefore, the attacked error could exceed $\epsilon_e$ if \eqref{eq:noFalseAlarmThresholdCondition} does not hold.
    Note that since the adversary has no control over the term $g_{\tilde{e}d}(t)*d(t)$ in \eqref{eq:convolutionAttackedError}, an unfavorable realization of the disturbance $d$ may also be required for $\tilde{e}$ to exceed $\epsilon_{e}$.
\end{remark}

%% file: defensemechanisms.tex
\section{Attack-robust observer design}
\label{sec:defenseMechanism}
In the previous section, we have provided a sufficient condition for attack-robustness and, therefore, also for when an adversary is able to increase the estimation error above the nominal bound.
This ability presents a risk, since any safety filter that relies on a bound on the error would be compromised. 
One solution would be to use $\epsilon_{\tilde{e}}$ instead of $\epsilon_e$ in the robust controller, which would guarantee safety even in the presence of an attack. 
However, this might deteriorate the control performance of the closed-loop system if $\epsilon_{\tilde{e}}\gg \epsilon_e$.
Another solution would be to change the detector threshold $\nu$ to reduce the impact of an attacker who wants to remain undetected, at the cost of false alarms. In Section~\ref{sec:DetectorThresholdChoice}, the effects of the choice of the detector threshold are investigated.

In this section, we instead address Problem~\ref{prob:OberserverDesign} by presenting a third alternative: designing the observer gain $K$ to limit the impact of an attack. 
The observer gain $K$ influences the dynamics of both the nominal error \eqref{eq:errordynamics} and the attacked error \eqref{eq:attackederrordynamics}.
The trivial solution ${K = 0}$ eliminates the attacker's influence over the system, but also eliminates the observer's ability to use measurements. Ideally, we want to design the gain $K$, such that attack-robustness is achieved or, at least, such that $\epsilon_e\approx \epsilon_{\tilde{e}}$ holds.
At the same time, we do not want to increase the nominal error bound $\epsilon_e$ significantly.
Thus, we are faced with a multi-objective optimization problem, where we would like to minimize both $\epsilon_e$ and $\epsilon_{\tilde{e}}$.

Although there are methods to design observers to reduce the induced $L_\infty$ gain, these methods are usually complex to implement and can be computationally expensive \cite{Peak2PeakIsDifficultToWorkWith}.
Therefore, we make use of the fact that some peak-to-peak norms are equivalent to the $\mathcal{H}_{\infty}$ norm, for example, \cite{skogestad2005multivariable}
\begin{align}
\norm{e}_{\mathcal{H}_{\infty}} \leq \sqrt{n_y}\norm{e}_{1,\infty} \leq \sqrt{n_u n_y}(2n_x+1)\norm{e}_{\mathcal{H}_{\infty}},
\end{align}
and will use $\mathcal{H}_{\infty}$ optimization to design $K$, where $\norm{e}_{\mathcal{H}_\infty}$ denotes the $\mathcal{H}_{\infty}$ norm of $e$.
Recall that $\gamma$ is an upper bound on the $\mathcal{H}_{\infty}$ norm of \eqref{eq:errordynamics} if there exists a positive definite matrix $P$ such that the matrix $ \Theta\left(P, A-KC, N_1-KN_2, \gamma\right)$ is negative definite, where
\begin{align}
\label{eq:boundedRealLemmaHelperFunction}
\Theta(P,A,B,\gamma) = \begin{pmatrix}
            PA + A^TP & PB & I \\
            B^TP & -\gamma I & 0 \\
            I & 0 & -\gamma I
    \end{pmatrix},
\end{align}
with the identity and zero matrices having appropriate dimensions \cite{BoundedRealLemma}.

Hence, instead of finding a gain $K$ that minimizes $\epsilon_e$ and $\epsilon_{\tilde{e}}$ directly, we focus on minimizing the  $\mathcal{H}_{\infty}$ norms, $\gamma$ and $\gamma_{\tilde{e}}$ of the error systems \eqref{eq:errordynamics} and \eqref{eq:attackederrordynamics}, respectively.
To find a solution to this multi-objective problem we use a scalarization approach, which leads to the following optimization problem:
\begin{equation}
\label{eq:FloatingDiffOptimisation}
\begin{aligned}
        \min_{K, P, \gamma_{\tilde{e}}, \gamma} \beta\gamma_{\tilde{e}} + \gamma\ \text{s.t.}\ \begin{cases}
            
            P \succ 0,\\
            \Theta\left(P, A-KC, N_1-KN_2, \gamma\right) \prec 0,\\
            \Theta\left(P,A,\begin{pmatrix} N_1, & -K\end{pmatrix},\gamma_{\tilde{e}}\right) \prec 0,
        \end{cases}
\end{aligned}
\end{equation}
where $\beta>0$ is a user-defined weight.
Here, the constraints ensure that $\gamma$ and $\gamma_{\tilde{e}}$ are upper bounds for the $\mathcal{H}_{\infty}$ norm of the respective systems. Note that the last constraint, derived from \eqref{eq:matrixConvolutionAttackedError}, considers a stacked input vector $\begin{pmatrix} d^\top & a^\top\end{pmatrix}^\top$.
Although not convex, \eqref{eq:FloatingDiffOptimisation} can be re-formulated as an equivalent convex optimization problem using standard techniques \cite{skogestad2005multivariable}.

Note that due to the first two constraints in \eqref{eq:FloatingDiffOptimisation} the designed observer will be stable. 
Since an equivalent convex problem to \eqref{eq:FloatingDiffOptimisation} exists, one can achieve all values on the Pareto front of the multi-objective optimization problem that minimize $\gamma$ and $\gamma_{\tilde{e}}$ by varying $\beta$ \cite{boyd2004convex}. Hence, when designing $K$ one can choose the trade-off between minimizing $\gamma$ and $\gamma_{\tilde{e}}$ with $\beta$. Moreover, with $K=0$ and $\gamma=\gamma_{\tilde{e}}$ one can always find a feasible solution to \eqref{eq:FloatingDiffOptimisation} since $A$ is stable.
Finally, \eqref{eq:FloatingDiffOptimisation} aims to reduce $\norm{\tilde{e}}_{\infty,q}$ regardless of the attacker wanting to remain undetected or not, i.e., the magnitude of the attack. This also enables us to independently design the observer gain and tune the detector threshold to further limit the attack impact if necessary.

%% file: applicationToRobustCBFs.tex
\section{Numerical Experiments} 
\label{sec:NumericalExperimentsCBFs}
In this section, we numerically evaluate our results from the previous sections. 
We begin with an illustrative example that shows that safe robust controllers do not guarantee safety in the presence of a sensor attack.
Then we will evaluate the observer design by comparing it to the Kalman filter and consider randomly generated strictly proper and stable systems.
Finally, we investigate how the choice of detector threshold influences the attack impact and false alarm rate using again randomly generated systems. 

The infinity vector norm, i.e., $q=\infty$, is used throughout to obtain the signal and the induced system norms.
For all systems, we use $N_1 = \begin{pmatrix}
    I & 0
\end{pmatrix}$ and ${N_2 = \begin{pmatrix}
    0& I
\end{pmatrix}}$ such that $d(t) \in \R^{n_x+n_y}$.
The disturbance $d(t)$ is bounded by ${\norm{d}_{\infty,\infty}\leq \epsilon_d = 0.5}$, where each element is uniformly drawn from the interval $[-0.5, 0.5]$ and independent of the other elements in $d(t)$. 

\subsection{Illustrative examples}\label{sec:IllustrativeExamples}
For the illustrative example, we use the following system
\begin{equation}\label{eq:SimulationSystemFull}
    \left\{\begin{aligned}
        \Dot{x}(t) &= \begin{pmatrix} 
        -0.2 & 1 \\ 
        0 & -1 
        \end{pmatrix} x(t)
        + \begin{pmatrix}
            0 \\ 1
        \end{pmatrix}u(t)
        + \begin{pmatrix}
            I & 0
        \end{pmatrix}d(t) \\
        y(t) &= x(t) + \begin{pmatrix}
            0 & I
        \end{pmatrix}d(t)
    \end{aligned}\right.
\end{equation}
The observer is a Kalman filter, designed with the noise covariance of independent and identically distributed uniform process and measurement noise, both with the bound $0.5$.
System~\eqref{eq:SimulationSystemFull} has the norms ${\norm{g_{ed}}_{1,\infty} \approx \num{2.44}}$, ${\norm{g_{\tilde{e}a}}_{1,\infty} \approx \num{8.36}}$, ${\norm{g_{rd}}_{1,\infty} \approx \num{3.44}}$, and ${\norm{g_{\tilde{e}d}}_{1,\infty} \approx \num{9.97}}$, which leads to
\begin{align*}
    \epsilon_{e} &\approx \num{1.22}, & \epsilon_{\tilde{e}} &\approx \num{19.37}, & \nu &\approx \num{1.72}
\end{align*}
with $\nu$ at the level needed to ensure no false alarms. Note that since ${\norm{g_{\tilde{e}d}}_{1,\infty} > \norm{g_{ed}}_{1,\infty}}$, no $\nu > 0$ could fulfill the condition in Theorem \ref{thm:largerNominalBound}.

We consider the system safe if it remains within the set given by $h(x)\geq 0$ with $h(x) = -x_1 - x_2$. To obtain a safe control input, a safety filter based on the observer-robust CBF introduced in \cite{agrawal_safe_2023} is used.
This safety filter is given by a quadratic optimization problem as follows
\begin{equation*}
\begin{aligned}
    &\min_{u}\|u-u_{des}\|^2_2, \\ 
    \mathrm{s.t.}\enskip &  \text{\small $\nabla h(\hat{x})(A\hat{x}+Bu)\geq -\alpha\big(h(\hat{x})-L_h M(t)\big)+L_h\dot{M}(t)$},
\end{aligned}
\end{equation*}
where $\nabla h(\hat{x})$ and $L_h$ are the gradient and Lipschitz constant of $h(x)$, and $e(t) \leq M(t)$ for all $t$. 

In our simulations, we use $x(0)=\hat{x}(0)=\begin{pmatrix}
    -2, & 0
\end{pmatrix}^\top$, such that $M(t)$ is a constant that corresponds to the error bound and we use ${\alpha(h) = h}$. 
We further use $u_{des} = 1$, which would lead to an expected steady state of $x_{ss}=\begin{pmatrix}
    5, &1
\end{pmatrix}^\top$. Note that $h(x_{ss})<0$ such that this state is unsafe and the safety filter should prevent the system to reach that point. For the sensor attack, we use the attack proposed by \cite{ArnstroemStealthyDeactivation}. The goal of this attack is to bias the state estimate towards the inside of the safe set, such that the safety filter will be deactivated.

In Fig.~\ref{fig:Attack} we show the nominal and attacked trajectory of the system together with the trajectory of the state estimates.
In Fig.~\ref{fig:Attack}(a), the nominal error bound $\epsilon_e$ is used in the robust CBF, i.e., $M(t)=\epsilon_e$. The attack is able to steer the state to the unsafe steady state $x_{ss}$ indicating that the safety filter was successfully deactivated, while under nominal conditions, the safety is not violated despite the presence of the disturbance.
In Fig.~\ref{fig:Attack}(b), the attacked error bound is used, i.e., $M(t)=\epsilon_{\tilde{e}}$, and in both cases the trajectories remain safe.
It is important to note that the bounds we obtained are worst-case bounds, i.e., the realization of the disturbance is the worst-case as well. However, the simulations used a random disturbance and the attacker still managed to deactivate the safety filter.

\begin{figure}
    \centering
    \subfloat[Trajectories when using the $\epsilon_{e}$ bound.]{
        \includegraphics[clip,width=\linewidth]{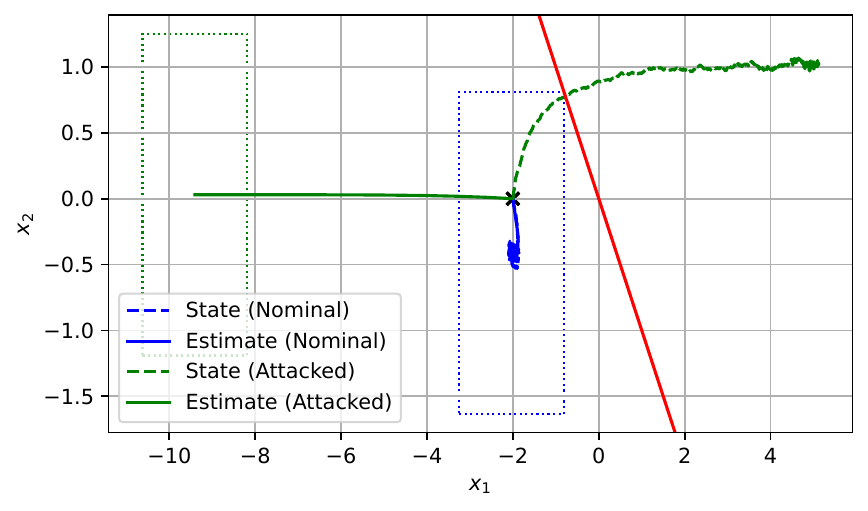}
    }
    \\
    \subfloat[Trajectories when using the $\epsilon_{\tilde{e}}$ bound.]{
        \includegraphics[clip,width=\linewidth]{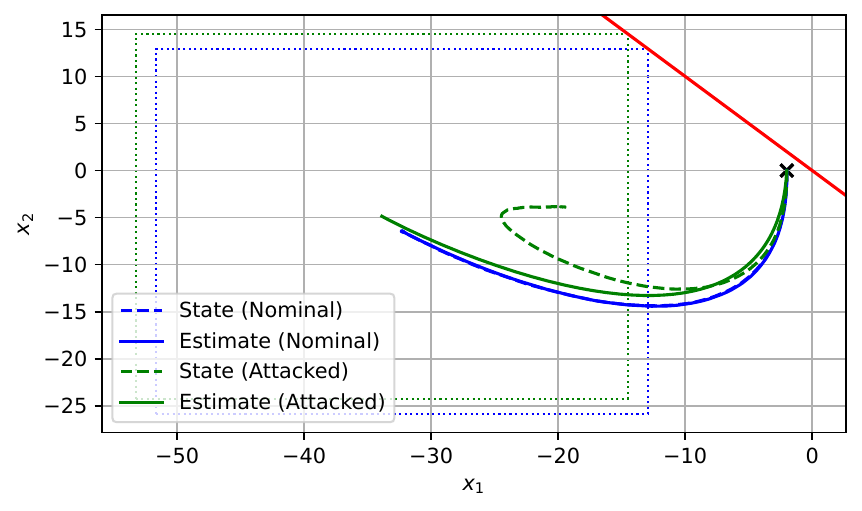}
    }
    \caption{Trajectories for the system under nominal conditions and under attack. The red line shows the boundary of the safe set and the dotted lines show the boundaries of the sets where the robust CBFs assume the true state to be.}
    \label{fig:Attack}
\end{figure}

\subsection{Observer design}
Let us now evaluate our attack-aware observer design proposed in Section~\ref{sec:defenseMechanism}.
To obtain statistically relevant results, we randomly generate 100 strictly proper and stable systems with $n_x=n_u=n_y=2$ and solve problem \eqref{eq:FloatingDiffOptimisation} to obtain the gain $K$ for each of the systems. 
Recall that our observer design enables us, through $\beta$, to choose the trade-off between the error bounds in the nominal and attacked cases.
We set the threshold such that no false alarms will be triggered, as given in Section~\ref{sec:ErrorBoundAnalysis}.

Furthermore, we compare our method to the Kalman filter design as a baseline. Figure~\ref{fig:ErrorBoundComparison} shows that observers designed according to \eqref{eq:FloatingDiffOptimisation} bring $\epsilon_{\tilde{e}}$ closer to $\epsilon_{e}$ compared with a Kalman filter. This comes at the cost of the nominal error bound $\epsilon_e$ typically being larger. Furthermore, we observe that by choosing a larger $\beta$, the designed observer leads to lower values for $\epsilon_{\tilde{e}}$. This showcases the ability to trade off nominal and attacked error bounds through $\beta$.
\begin{table}[ht]
    \centering
    \caption{Quantiles of the ratio between $\epsilon_{\tilde{e}}$ and $\norm{\tilde{e}}_{\infty,q}$}
    \label{tab:Conservatism}
    
    \begin{tabular}{@{}l*{5}{S[table-format=1.2]}@{}}

        \toprule
        Quantile & 0.9 & 0.95 & 0.99 & 1 \\ \midrule
        $\frac{\epsilon_{\tilde{e}}}{\norm{\tilde{e}}_{\infty,q}}$ & 1.07 & 1.13 & 1.27 & 1.44 \\ \bottomrule
    \end{tabular}
    \end{table}
    
Since $\epsilon_{\tilde{e}}$ is an upper bound on the attacked error bound, $\norm{\tilde{e}}_{\infty,q}$, let us investigate how conservative this bound is. We use 1000 random systems and calculate $\norm{\tilde{e}}_{\infty,\infty}$. 
The quantiles of the ratio between $\epsilon_{\tilde{e}}$ and $\norm{\tilde{e}}_{\infty,\infty}$ are shown in Table~\ref{tab:Conservatism}. 
We observe that ${\epsilon_{\tilde{e}}\leq 1.44 \norm{\tilde{e}}_{\infty,\infty}}$ and ${\epsilon_{\tilde{e}}\leq 1.07 \norm{\tilde{e}}_{\infty,\infty}}$ holds for all and for $90\,\%$ of the systems, respectively, that use our observer with $\beta=100$. Thus, in most cases, the bound is not very conservative.
\begin{figure}
    \centering
    \input{optimization_compare_plot}
    \caption{Error bounds for Kalman filters and observers designed according to \eqref{eq:FloatingDiffOptimisation} for 100 randomly generated systems.}
    \label{fig:ErrorBoundComparison}
\end{figure}
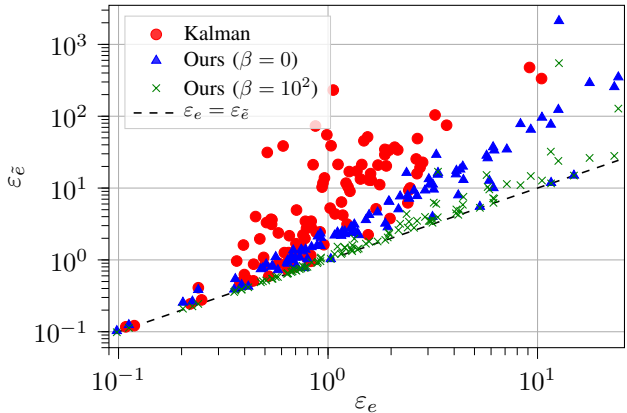

\subsection{Influence of the detector threshold}
\label{sec:DetectorThresholdChoice}
Finally, we will investigate how the tuning of the detector threshold $\nu$ can be used to mitigate the attack impact.
Here, 1000 randomly drawn systems were used, and for each of the systems we designed a Kalman filter and an observer based on our observer design with $\beta=100$.  
Let ${\nu_{\max}= \norm{g_{rd}}_{1,\infty}\epsilon_d}$ be the threshold that will not result in false alarms. 

Consider $\epsilon_{\tilde{e}}$ as a measure of the attack impact. Then, \eqref{eq:attackedErrorBound} shows that $\epsilon_{\tilde{e}}$ decreases linearly as we decrease $\nu$. Furthermore, a necessary condition for finding a $\nu>0$ that fulfills \eqref{eq:conditionOnBounds} is $ \norm{g_{ed}}_{1,q} > \norm{g_{\tilde{e}d}}_{1,q}$.

While choosing $\nu<\nu_{\max}$ does limit the attacker, it also leads to false alarms. 
To investigate their prevalence, we simulated 1000 randomly drawn systems investigated for \qty{100}{\s} under nominal conditions, and calculated the empirical false alarm rate for different values of $\nu\in(0,\nu_{\max}]$.
\begin{figure}
    \centering
    \includegraphics[clip,width=\linewidth]{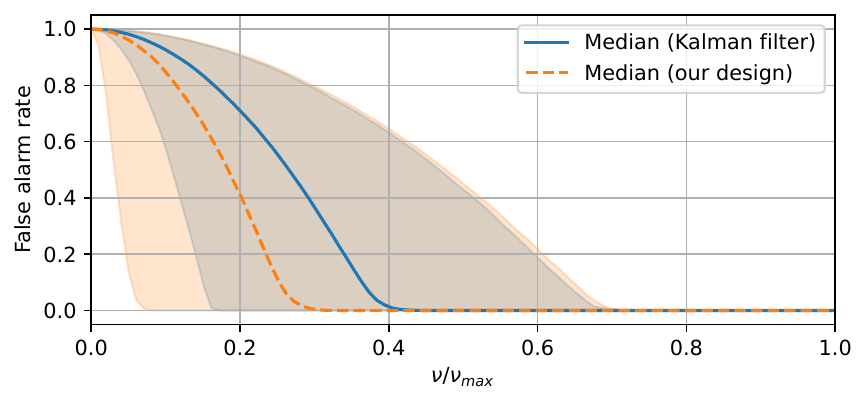}
    \caption{Empirical median false alarm rate based on 1000 randomly generated systems equipped with a Kalman observer or our observer with $\beta=100$. The shaded area indicates the $90\%$ confidence interval of the empirical false alarm rates.}
    \label{fig:EmpFARVaryThreshold}
\end{figure}
The results are shown in Fig.~\ref{fig:EmpFARVaryThreshold}. In this figure, we observe that the median for the false alarm rate when our observer design is used is smaller than the median for the system with the Kalman filter.
Hence, our observer design seems to reduce the false alarm rate compared to the Kalman filter design.
Further, the $90\,\%$ confidence interval depicted as the shaded area also shows that in many cases we can choose $\nu=0.75\nu_{\max}$ without significantly increasing the false alarm rate.

%% file: optimization_compare_plot.tex
\begin{tikzpicture}

\definecolor{darkgray176}{RGB}{176,176,176}
\definecolor{green}{RGB}{0,128,0}
\definecolor{lightgray204}{RGB}{204,204,204}

\begin{axis}[
legend cell align={left},
legend style={
  fill opacity=0.8,
  draw opacity=1,
  text opacity=1,
  nodes={scale=0.8, transform shape},
  at={(0.03,0.97)},
  anchor=north west,
  draw=lightgray204
},
log basis x={10},
log basis y={10},
tick align=outside,
tick pos=left,
x grid style={darkgray176},
xlabel={$\varepsilon_{e}$},
xmin=0.09, xmax=26,
xmode=log,
xtick style={color=black},
y grid style={darkgray176},
ylabel={$\varepsilon_{\tilde{e}}$},
ymin=0.06, ymax=3500,
ymode=log,
ytick style={color=black},
y post scale = 0.8,
grid=major
]
\addplot [draw=red, fill=red, mark=*, only marks]
table{%
x  y
0.521563541723897 0.592750613710957
0.394657430930652 1.61048156140507
0.240391759391619 0.409898568017517
1.95436275493931 33.611212426343
0.512783056222608 31.3180896230226
2.08024317496115 37.0953438599277
0.219940296072707 0.243984087069214
0.610313992016622 38.531553698655
1.11597052808077 21.3410536489839
0.679546961206522 0.787437910729311
0.65341565999473 2.16964244309668
1.44959705913112 21.2764824405912
0.936728277586764 10.3076100236739
0.6364900908741 0.737509808731624
1.97658403243591 3.76697110802548
2.15176328420906 33.9912544162925
1.2325484169924 3.15131092423227
1.86714267809379 29.1125138992254
0.759110298986011 2.22505306045619
0.45098123257154 4.01012234134634
0.249249199108505 0.27735164312632
1.56890603066723 12.9287079614755
0.669950473208994 0.883643250533095
0.828481522708857 0.955014773578475
0.118943215984542 0.121552723851962
0.826975528403372 1.15803580373353
0.722926999311136 1.46371596146189
0.789351990627249 2.7479585480024
0.440495766713718 0.510927006212156
0.70983740029967 1.19680095678869
0.534272459928039 3.68239802764475
9.15938558968094 477.191542527148
1.15913841847295 6.388383324196
1.73161020026751 20.8897930400264
0.702998228782727 1.68193003775562
0.876347993091756 2.43268508776573
1.05903536181013 231.398958825586
0.955331857518295 1.62590888168395
0.53796639964589 3.12483546573011
1.26274077421339 16.9596001561849
0.817966796352209 2.92820456854498
1.27229861172343 9.02144388914537
0.398856918682001 0.624694920677988
0.492813959729392 1.08355724514732
0.497197282553316 0.832154055775895
0.987019238970737 55.1535370205756
1.89742126744305 20.3408321508147
0.108331323583478 0.11702846790459
2.74899511527705 19.3299762425765
2.82793160649735 22.819880802311
1.86754961509016 34.5674610574764
0.775886731542364 1.98544562662422
2.67201067388279 48.8551416009982
2.4037257255686 6.22713951575173
0.829492523166814 2.87526525393538
1.02338830839597 5.25505478403278
3.24262236373458 104.047194107339
1.4281179724004 13.3677079386444
0.747328846966887 1.51854073247404
2.42890539844746 9.32201160182472
1.37385344686553 17.2050290593238
0.580237628225489 0.900023250498943
0.47243055262536 1.96026134534366
0.916228309041075 2.45671019261997
3.68188608597315 74.9550017855903
1.17645600671105 4.19421334728739
0.508398133593779 3.309497312375
0.366522653655478 0.964454541538505
1.24580313615547 10.3769758164946
0.779836858296555 3.45138637137295
0.539914644106698 3.62733111389616
0.848509673434568 21.1416994136933
0.648227639000628 1.26382882790798
0.870900202786726 73.1278464828742
0.44125186350087 0.871720328433142
1.6953364449874 5.13847869678095
1.59599496283107 15.8900384575563
1.48308977697763 45.4766169359928
0.794015477602447 1.06499821226223
0.945840677748477 11.7614757828237
10.4518604990381 334.541389649597
0.379722914869184 0.479249626219962
0.70719871683575 4.94273299271895
2.63560317881115 25.4237381857582
1.60825186811308 12.892495850986
0.640976522597607 0.931181440171092
1.03387979350418 38.856074597272
2.46471103673138 10.0304513362228
1.21894595689139 11.7197375856541
0.961596723394763 13.8460254712849
0.831050417965657 3.45809754146759
0.579095595158488 2.37348943222731
0.684420204706323 1.24351256385621
0.412075459547109 0.528551655742972
1.76449966063939 11.1620119536224
1.07992135197794 4.36582277136108
1.55418091120946 2.23715923469293
0.618696541550737 1.26311463426773
2.65867273152318 15.848403012563
1.54501899604925 51.513198390461
};
\addlegendentry{Kalman}
\addplot [draw=blue, fill=blue, mark=triangle*, only marks]
table{%
x  y
0.488313061805916 0.767536980679558
0.682106647202888 1.36302823641533
0.240136054271206 0.382996212420988
5.71291279469989 28.3866200534281
3.34912777848451 16.3341631555
5.78288016111179 26.1159044992558
0.202941680664502 0.254955801643657
14.9170239815758 14.9853572081938
5.30743268608433 5.51223828416543
0.643820358441474 0.889984698041736
0.755935452341892 1.31938395473762
4.42729339358637 13.032149084308
2.20932254030778 7.69059114744241
0.576357806327891 0.723883502234424
1.92682098130237 2.85748017663666
6.13834741316609 37.2372717585334
1.30328630490338 2.37609567879804
6.16171068149103 10.026892955063
0.919476058932506 2.34417385755916
1.33529321548262 3.12299815027578
0.225254467307262 0.264222161914888
3.16328686716967 11.2662593059489
0.642031984156918 0.989188080720138
0.797721146601016 1.02910114186731
0.111979823270954 0.124308232766763
0.772885792476087 0.773043389828824
0.677303748012003 1.00358661768646
0.925909500303874 2.03862353872478
0.416210428284808 0.426337581757334
0.668480336066022 1.11892514014269
1.30716197152319 2.80270465693929
24.3050630568893 349.53240461347
1.95961435849145 6.48114532530684
3.19550557134234 9.58274989850494
0.780295639280859 1.39614174465353
0.894649958464409 2.13030358991957
12.6552802761848 2122.89422572933
0.888186675254264 1.59394894099733
1.20204996798654 2.29136690807002
2.37028130127383 16.3458393397817
1.16822606976037 2.48846445500366
1.94931701183547 6.54933870915395
0.386868181110485 0.44860973963111
0.562764489650402 1.10019506449049
0.506681710305439 0.870260456625423
12.5727574872146 123.275458836982
4.10043581674292 17.4832659018964
0.09813235077247 0.102626566621317
3.33865163169468 15.4332822947352
4.27603703766597 18.2193379075299
5.88734023981616 12.7054586411233
1.03083481629228 1.03081762709495
7.13613826891343 34.6486832467245
2.16910787835356 5.09026632094928
1.07452554808694 2.19527841182923
1.57102364911261 4.24351565032314
11.5722476364746 76.6951383886991
2.90415486527931 9.31198736767194
0.744294838402072 1.38956661530942
2.66013902583211 7.77878704283354
4.18555051599178 7.96828924709181
0.539363534719785 0.807703963191735
0.834367123454664 1.51341979556884
1.15511860312057 2.17459031205749
9.28628627019621 65.2419023896973
1.292226031318 2.41693365630819
1.20180138348627 2.49358416162988
0.47777055579405 0.746528869070356
2.43394057707905 7.14332113316949
1.28904577910799 2.86071353039012
1.10967286302128 2.41740936331758
6.21180849335098 33.5350355145015
0.706271261730881 0.830709550945239
23.2517373330634 255.721816671693
0.360952086861361 0.543849525703119
1.72472227250974 4.60390320810878
4.07997557839604 15.5957807546411
11.6071461763061 11.8440630757206
0.732384761427206 1.08577428578228
3.06452053034013 11.4638208310039
17.7223099199889 291.348417324686
0.356604150491071 0.388603215261931
1.8770830678055 3.21633515187734
4.39800460587537 20.0083181568941
3.105225777576 11.3022243947539
0.543608606508108 0.821016450251597
8.29020966222521 78.2103123418156
2.57352771589116 9.23428117898781
3.15241535450725 3.9816247707797
3.28619786870732 29.1894564209284
1.38078629738713 2.39505889746686
0.921979948724661 1.56759800188231
0.636754098204518 0.917374281524403
0.386715233720841 0.421758489077808
2.39851943580008 7.87646038126114
1.5161750487573 3.60435577251044
1.40125708017507 2.22334234718817
0.698195686973615 1.08122488043321
3.02753596780414 13.3868595529403
10.499295144004 95.9574863740743
};
\addlegendentry{Ours ($\beta = 0$)}
\addplot [draw=green, fill=green, mark=x, only marks]
table{%
x  y
0.488313061805916 0.504375757268577
0.682106647202888 0.753131178932901
0.240136054271206 0.244132394462547
5.71291279469989 15.1978257337019
3.34912777848451 16.8944956128489
5.78288016111179 10.3481395360934
0.202941680664502 0.209380408962013
14.9170239815758 14.9204856570668
5.30743268608433 5.31113532110468
0.643820358441474 0.653024979669807
0.755935452341892 0.825034459100906
4.42729339358637 7.55411019172561
2.20932254030778 3.1871802334762
0.576357806327891 0.578671090986963
1.92682098130237 2.05535628807155
6.13834741316609 7.21954444988125
1.30328630490338 1.32499137959202
6.16171068149103 6.23233342656331
0.919476058932506 0.960050752590239
1.33529321548262 1.97461082832573
0.225254467307262 0.225753515128368
3.16328686716967 3.6545321186318
0.642031984156918 0.648339350225058
0.797721146601016 0.801138390263566
0.111979823270954 0.112924765170166
0.772885792476087 0.77292240329846
0.677303748012003 0.693837627456722
0.925909500303874 1.07870762151354
0.416210428284808 0.41627163292806
0.668480336066022 0.67747426520699
1.30716197152319 1.59831961625594
24.3050630568893 127.718747318541
1.95961435849145 2.30020113086357
3.19550557134234 6.26593492778589
0.780295639280859 0.797217206748206
0.894649958464409 0.941750165509981
12.6552802761848 548.256283003045
0.888186675254264 0.902656938089246
1.20204996798654 1.49875058070223
2.37028130127383 4.35318271757025
1.16822606976037 1.25570809471973
1.94931701183547 2.39417001953423
0.386868181110485 0.386878828534477
0.562764489650402 0.577297363320522
0.506681710305439 0.517037632250846
12.5727574872146 23.8762198388414
4.10043581674292 5.03832491467512
0.09813235077247 0.0983655152596384
3.33865163169468 3.92352755948702
4.27603703766597 4.80842715869732
5.88734023981616 9.57329756882754
1.03083481629228 1.03084435505692
7.13613826891343 11.3841493232391
2.16910787835356 2.59465111627588
1.07452554808694 1.20727480876605
1.57102364911261 1.80218933662504
11.5722476364746 32.1436115678802
2.90415486527931 3.96157845810146
0.744294838402072 0.76183011132327
2.66013902583211 3.50481524079
4.18555051599178 4.25064602287808
0.539363534719785 0.544537808627632
0.834367123454664 0.925667854560058
1.15511860312057 1.20920755695172
9.28628627019621 14.7141774187779
1.292226031318 1.37734593608208
1.20180138348627 1.47846136513694
0.47777055579405 0.488066217360436
2.43394057707905 3.40671299843832
1.28904577910799 1.43890530500851
1.10967286302128 1.53059736757133
6.21180849335098 7.10511834755197
0.706271261730881 0.706283987103216
23.2517373330634 28.0358397889323
0.360952086861361 0.373849392915949
1.72472227250974 1.80488970940461
4.07997557839604 4.45495692976521
11.6071461763061 11.6075843258075
0.732384761427206 0.743220506206182
3.06452053034013 3.49258048796799
17.7223099199889 26.2813418646544
0.356604150491071 0.356904997949196
1.8770830678055 1.90111760405496
4.39800460587537 5.86267064501582
3.105225777576 3.44896226810979
0.543608606508108 0.547471537917828
8.29020966222521 10.9664887664196
2.57352771589116 2.79993789845709
3.15241535450725 3.29226587294313
3.28619786870732 6.03134293254204
1.38078629738713 1.40030254143316
0.921979948724661 1.04351011807839
0.636754098204518 0.640881213124758
0.386715233720841 0.387099045621999
2.39851943580008 2.81909913273232
1.5161750487573 1.68311823294519
1.40125708017507 1.43943687290494
0.698195686973615 0.725582272107255
3.02753596780414 3.50054530541709
10.499295144004 12.7202317345043
};
\addlegendentry{Ours ($\beta = 10^2$)}
\addplot [semithick, black, dashed]
table {%
0.09813235077247 0.09813235077247
24.3560664607765 24.3560664607765
};
\addlegendentry{$\epsilon_e = \epsilon_{\tilde{e}}$}
\end{axis}

\end{tikzpicture}

%% file: conclusions.tex
\section{Conclusions}
In Theorem~\ref{thm:largerNominalBound}, we have provided a sufficient condition for a system's \emph{attack-robustness}, i.e., when the estimation error in the case of attack is smaller than the bound during nominal operation, independent of the attack strategy.
Any safety guarantee based on a bound on the nominal estimation error would be invalidated, in the case of attack, if this condition were not fulfilled. 
One remedy would be to increase the error bound to the bound under attack. 
However, this change might reduce the control performance, and therefore, we proposed an observer design with the aim of reducing the attack impact. We evaluated the observer design on randomly drawn systems and also investigated the false alarm rate based on the detector threshold.

There are several paths for future research.
For example, an extension to non-linear systems would generalize the results, along with investigating more general sets than norm balls. Furthermore, incorporating actuator attacks in our framework in addition to sensor attacks would increase its applicability.